\theoremstyle{plain}
\newtheorem{theorem}{Theorem}[section]
\newtheorem{lemma}[theorem]{Lemma}
\theoremstyle{definition}
\newtheorem{definition}[theorem]{Definition}
\theoremstyle{remark}
\newtheorem{remark}[theorem]{Remark}
\title{dSTAR: Straggler Tolerant and Byzantine Resilient Distributed SGD}
\author{%
  Jiahe Yan\\
  Department of Computer Science\\
  University of California, Los Angeles\\
  Los Angeles, CA 90024 \\
  \texttt{yjh020711@g.ucla.edu} \\
  \And
  Pratik Chaudhari\\
  Department of Electrical and System Engineering\\
  University of Pennsylvania\\
  Philadelphia, PA 19104 \\
  \texttt{pratikac@seas.upenn.edu} \\
  \And
  Leonard Kleinrock\\
  Department of Computer Science\\
  University of California, Los Angeles\\
  Los Angeles, CA 90024 \\
  \texttt{lk@cs.ucla.edu} \\
}
\begin{document}

\maketitle

\begin{abstract}
Distributed model training needs to be adapted to challenges such as the straggler effect and Byzantine attacks. When coordinating the training process with multiple computing nodes, ensuring timely and reliable gradient aggregation amidst network and system malfunctions is essential. To tackle these issues, we propose \textit{dSTAR}, a lightweight and efficient approach for distributed stochastic gradient descent (SGD) that enhances robustness and convergence. \textit{dSTAR} selectively aggregates gradients by collecting updates from the first \(k\) workers to respond, filtering them based on deviations calculated using an ensemble median. This method not only mitigates the impact of stragglers but also fortifies the model against Byzantine adversaries. We theoretically establish that \textit{dSTAR} is (\(\alpha, f\))-Byzantine resilient and achieves a linear convergence rate. Empirical evaluations across various scenarios demonstrate that \textit{dSTAR} consistently maintains high accuracy, outperforming other Byzantine-resilient methods that often suffer up to a 40-50\% accuracy drop under attack. Our results highlight \textit{dSTAR} as a robust solution for training models in distributed environments prone to both straggler delays and Byzantine faults.
\end{abstract}

\section{Introduction}
Distributed SGD has become a standard way of training large machine learning models due to its scalability and efficiency in processing vast amounts of data in parallel across multiple computing nodes. We consider the classical setting with a single parameter server and \(N\) workers \cite{abadi2016tensorflow}. Given \(X \in \mathbb{R}^{m \times d}\) representing \(m\) \(d\)-dimensional data, \(y \in \mathbb{Z}^{m}\) where each element of \(y\) is the discrete label of the respective row in \(X\), and a loss function \(F(\theta)\) for the dataset, where \(\theta\) are the model parameters, the parameter server wants to find \(\theta^*\) that minimizes the loss function \(F\). During each iteration, the parameter server sends model parameters \(\theta\) to all workers. Each worker contains a unique subset of \(X\) to parallelize gradient computation. The worker computes and returns the gradient of \(\theta\) on the local dataset to the server, which then aggregates the gradients to perform stochastic gradient descent.

While distributed SGD offers enhanced scalability and acceleration, it also introduces fault tolerance concerns in distributed systems. Workers in a distributed system can be Byzantine faulty. The identity of such Byzantine workers is also a priori unknown. Byzantine workers may produce wrong or even malicious results back to the parameter server due to various reasons, from system failure to malicious attacks \cite{lamport1982byzantine}. Averaging, which is the simplest way to aggregate gradients from workers, has been proven fragile to even one worker being Byzantine \cite{Blanchard2017MLAdversaries}. To confer Byzantine resilience in distributed SGD, many Gradient Aggregation Rules (GARs) have been proposed to allow learning to occur under a (maximum) number of \(f\) Byzantine workers under synchronous and asynchronous settings. The maximum \(f\) that can be tolerated is called the \textit{breakdown point}, with the \textit{optimal breakdown point} being \(N > 2f\) \cite{Blanchard2017MLAdversaries}. That is, as long as the majority of workers are honest, model training can proceed. However, these GARs come with their own challenges. In synchronous SGD, the parameter server needs to wait for slow or unresponsive nodes known as stragglers \cite{Dean2013TailScale}. In asynchronous SGD, the server will update the model parameter as soon as any worker returns a gradient to avoid stragglers \cite{Dean2012LargeScale}. However, this leads to a smaller batch size per aggregation, effectively introducing noise to the model. Additionally, the server may also receive ``stale gradients'' computed from outdated \(\theta\), potentially causing the model to converge more slowly or even diverge. 

To address the dual challenges of Byzantine resilience and straggler tolerance, we present \textit{dSTAR}, a new Byzantine-resilient distributed SGD that selectively waits for \(k\) gradients from the fastest workers, selected using a filter that calculates deviations of worker gradients from an ensemble median (where \(1\leq k \leq N\) and is adaptive). In fault-free settings, the fastest-\(k\) SGD (or formally, synchronous SGD with backup workers) has been shown to achieve optimal performance as synchronous SGD while mitigating the straggler effect \cite{Chen2016RevisitingDistributed}. In the fastest-\(k\) SGD, the parameter server only waits for the fastest \(k\) workers per iteration before making a gradient descent update. Other gradients will simply be dropped. If we assume the response time of each worker is \(i.i.d\), it can be shown that the fastest-\(k\) SGD is equivalent to the single-node batch SGD since the server updates based on a uniformly random set of gradients. However, the fastest-\(k\) SGD is vulnerable to Byzantine attacks. By definition, Byzantine workers can return gradients anytime they want, whereas the response time of a non-Byzantine worker can be unbounded. Hence, Byzantine workers can always be in the fastest \(k\) and compromise training. In this paper, we introduce a new fastest-\(k\) variant that can be robust under Byzantine attack as long as the majority of nodes are honest. We show that \textit{dSTAR} consistently produces optimal models under different Byzantine attacks, model architecture, and datasets while other GARs can experience performance drops of 40-50\(\%\). Furthermore, since \(k\) is adjustable, \textit{dSTAR} offers a configurable spectrum from fully asynchronous to fully synchronous operation. This flexibility allows for tailoring the system dynamics based on specific requirements and constraints of the deployment environment.

\section{Related work}
Formally, a GAR is robust to Byzantine attacks if it satisfies \((\alpha, f)\)-Byzantine resilience \cite{Blanchard2017MLAdversaries}:

\begin{definition}[\((\alpha, f)\)-Byzantine Resilience]
Let \(\alpha \in [0, \frac{\pi}{2}]\), \(f \in [0, n]\). Let \(V_1, V_2,\ldots, V_n\) be any independent identically distributed random vectors in \(\mathbb{R}^d\) such that \(V_i \sim G\), with \(\mathbb{E}[G] = \nabla F\). Let \(B_1, B_2,\ldots, B_f\) be any random vectors \(\in \mathbb{R}^d\), possibly dependent on the \(V_i\)’s. An aggregation algorithm \(A\) is said to be \((\alpha, f)\)-Byzantine resilient if, for any \(1 \leq j_1 < \ldots < j_f \leq n\), the vector \(A = A(V_1,\ldots,\underbrace{B_1}_{j_1},\ldots,\underbrace{B_f}_{j_f},\ldots,V_n)\)
satisfies:
1) \(\langle \mathbb{E}[A], \nabla F \rangle \geq (1 - \sin(\alpha)) \lVert \nabla F \rVert^2 > 0\), and
2) for any \( r \in \{2, 3, 4\} \), \( \mathbb{E} \| A \| ^r \) is bounded above by a linear combination of terms \( \mathbb{E} \| G \| ^{r_1}, \ldots, \mathbb{E} \| G \| ^{r_{n-1}} \) with \( r_1 + \ldots + r_{n-1} = r \).
\end{definition}


Existing GARs ensure \((\alpha, f)\)-Byzantine Resilience by employing robust statistics to identify candidate gradients to aggregate. Most GARs focus on the fully synchronous setting where all gradients will be collected before applying the aggregation rule. Examples of synchronous GARs are as follows: a). AKSEL averages a subset of gradients based on their squared distances to the coordinate-wise median \cite{Boussetta2021Aksel}, b). KRUM chooses the gradient with the smallest sum of Euclidean distances with neighbors \cite{Blanchard2017MLAdversaries}, c). CGE averages a subset of gradients with the smallest norms \cite{Gupta2020ByzantineSGD}, d). TrMean discards extreme values and aggregates the top \((N - b)\) gradients nearest to the median where \(b\) is a hyperparameter \cite{Yin2018ByzantineRobust}. A few algorithms such as KARDAM and Zeno++ focus on the asynchronous setting, where the model can be updated as soon as any gradient is returned. KARDAM uses a sliding window based on gradient aggregation history and empirical Lipschitzness of gradients to filter for good gradients \cite{Damaskinos2018AsyncByzantineML}. Zeno++ chooses candidate gradients that lead to a greater descent of the loss value based on a validation set on the parameter server \cite{Xie2020Zeno}. Nevertheless, synchronous GARs suffer from stragglers and asynchronous GARs may produce suboptimal models. KARDAM can only support up to one-third of Byzantine workers. Zeno++ also requires manually configuring a gradient threshold, which can be can be time-consuming to optimize. Zeno++ further has a model error bound that is influenced by the presence of asynchronous noise, which can be substantial if stale gradients are utilized more than sparingly. Moreover, asynchronous GARs suffer from ``stale gradients'' computed from outdated model parameters.

\section{Contributions}
Traditional synchronous GARs mandate the collection of all workers' gradients for each iteration to ensure convergence because they depend on statistical measures within each iteration. To achieve optimal convergence without waiting for all gradients, \textit{dSTAR} focuses on statistics gained from the training history via a validation set approach similar to Zeno++. The parameter server keeps a unique subset of \(X\) as the validation set locally and computes its validation gradient to compare against incoming gradients. Unlike traditional approaches, \textit{dSTAR} determines a filtering threshold dynamically based on the historical ensemble median. \textit{dSTAR} further achieves optimal time complexity and breakdown point as shown in Table \ref{tab:method_comparison}. The key contributions of our work include: 1). Proposed a new SGD that addresses the straggler effect by waiting for only the \(k\) fastest gradients with a dynamically configured filtering threshold while being robust against Byzantine attacks; 2). Showed empirically that the SGD can consistently produce an optimal model; 3). Showed theoretically that the SGD has a linear convergence rate and is Byzantine-resilient.

\begin{table}[ht]
  \caption{Comparison of different gradient aggregation rules}
  \label{tab:method_comparison}
  \centering
  \begin{tabular}{lcc}
    \toprule
    \textbf{Method}       & \textbf{Time Complexity}     & \textbf{Breakdown Point} \\
    \midrule
    Average               & \(O(Nd)\)                      & \(f = 0\) \\
    AKSEL                 & \(O(Nd)\)                      & \(n > 2f\) \\
    TrMean                & \(O(Nd)\)                      & \(n > 2f\) \\
    KRUM                  & \(O(N^2d)\)                    & \(n > 2f + 1\) \\
    CGE                   & \(O(N (d + \log N))\)          & \(n > 2f\) \\
    \textbf{\textit{dSTAR}} & \(\mathbf{O(Nd)}\)            & \(\mathbf{n > 2f}\) \\
    \bottomrule
  \end{tabular}
\end{table}

\section{Assumptions}
\begin{enumerate}[leftmargin=*,label=\textbf{A\arabic*}]
\setlength\itemsep{-0.2em}
\item \textbf{(Unbiased gradients with bounded variance)} The proposed gradient \(g_i\) from the set of honest workers \(S_h\) are d-dimensional vectors and unbiased estimates of the true gradient and have bounded variance: 
\[ 
\forall i \in S_h, g_i \sim G, E[G] = \nabla F,
E[G_j - \nabla F_j]^2 = \sigma_j^2, E \|G - \nabla F\|^2 = E \sum_{j = 1}^{d} [G_j - \nabla F_j]^2 = d\sigma^2
\]

\item \textbf{(Lipschitz gradients)} The loss function \( F \) is Lipschitz continuous with \(L > 0 \): 
\[ \forall \theta_1, \theta_2, \left\| \nabla F(\theta_1) - \nabla F(\theta_2) \right\| \leq L \left\| \theta_1 - \theta_2 \right\| \]

\item \textbf{(Bounded gradients)} The gradients \(g_i\) from honest workers and \(g_v\) from validation set are all upper bounded by \(V\), the validation set gradient is also lower bounded by \(V'\) \cite{Xie2020Zeno}: 
\[\|g_i\|^2 \leq V,\  V' \leq \|g_v\|^2 \leq V,\ 0 < V' \leq V\]

\end{enumerate}

\section{Algorithm}
We present our new algorithm with its theoretical analysis. Algorithm \ref{alg:byzantine-resilient-aggregation} in the Appendix describes the full pseudo training loop code. \textit{dSTAR} aggregation involves evaluating each received gradient against two key metrics calculated from the validation gradient derived from the parameter server’s validation set: the dot product and the squared Euclidean distance. Unlike Zeno++, which requires manually configuring a threshold, \textit{dSTAR} compares both values against the values calculated using the historical median. The median in a system with optimal breakdown point is robust to Byzantine attack \cite{Xie2018GeneralizedByzantineTolerantSGD}. During the first iteration, we default to aggregate the median of all gradients (i.e. a fully synchronous iteration using MEDIAN GAR) since history is unknown. This procedure serves as a warm-up phase for the filtering of subsequent iterations. 

During each subsequent iteration \(t\), given an incoming gradient \(g^t_i\) and the local validation set gradient \(g^t_v\), the server computes normalized Euclidean distance \(s^t_i = \frac{\|g^t_i - g^t_v\|^2}{\|g^t_v\|}\) and dot product \(d^t_i = \left\langle \frac{g^t_i}{\|g^t_v\|}, \frac{g^t_v}{\|g^t_v\|} \right\rangle\). If \(s^t_i\) is less than or equal to the normalized Euclidean distance calculated using the historical median gradient and validation set gradient and \(d^t_i\) is greater than or equal to the normalized dot product calculated using the historical median gradient and validation set gradient, \(g^t_i\) is added to an accepted list. The collection phase stops once \(k\) gradients are accumulated or all workers have responded. Since gradients can vary significantly in magnitude across iterations, we included normalization in the calculation for Euclidean distance and the dot product to maintain a consistent scale relative to the validation gradient when evaluating the incoming gradients. The accepted gradients are then averaged to calculate the aggregated gradient \(g^t_{\text{agg}} = \frac{1}{k} \sum_{j=1}^{k} g^t_{\text{accepted}_j}\), and the model parameters are updated accordingly: \(\theta^{t+1} = \theta^t - \eta g^t_{\text{agg}}\). In experiments, we show that by simply using the first iteration median gradient and validation gradient as this historical threshold, \textit{dSTAR} already reaches top performance. In theory, extending the warmup period to more rounds may improve performance further. 

\subsection{Time complexity}
Calculating Euclidean distance and dot product are both \(O(Nd)\). For the first iteration, finding the median using quick select is also \(O(N)\) \cite{Blum1973TimeBoundsForSelection}. For all subsequent iterations, the algorithm simply retrieves the recorded median values and evaluates each incoming gradient against these metrics. Hence, the total time complexity for this algorithm is \(O (Nd) \). In practice, the effective time complexity is often lower than this theoretical upper bound as \(k < N\). Notably, \textit{dSTAR} has a much lower time complexity than methods like KRUM \((O(N^2d))\), which requires pairwise comparisons among gradients, and CGE \((O(N(d + \log N)))\), which requires sorting \(N\) gradients per iteration.

\subsection{Byzantine resilience analysis}
We show that \textit{dSTAR} is \((\alpha-f)\)-Byzantine resilient. First, it is important to point out the robustness of the median. For a sequence of higher-dimensional vectors with the optimal breakdown point, the coordinate-wise median will always lie within the range defined by the minimum and maximum values of the honest coordinates for that dimension \cite{Xie2018GeneralizedByzantineTolerantSGD}. Based on this, we illustrate that the aggregated gradient of each iteration satisfies the following two lemmas:

\begin{lemma}
(Proof in the appendix) Under assumptions A1 to A3, if \(g_*^t\) denotes the aggregated gradient for iteration \(t\), it satisifies:

\begin{equation}
\begin{split}
\langle \mathbb{E}[g^t_{*}], \nabla F \rangle &\geq \left( \| \nabla F \| - \sqrt{\frac{2(n-f)}{k}d\sigma^2} (\frac{V}{V^{'}})^{\frac{1}{4}} \right) \| \nabla F \|
\end{split}
\end{equation}
\end{lemma}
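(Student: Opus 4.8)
The plan is to reduce the claim to a second‑moment bound on the aggregation error and then to exploit the two filtering conditions together with the robustness of the coordinate‑wise median. First I would write
\[
\langle \mathbb{E}[g^t_{*}], \nabla F \rangle = \|\nabla F\|^2 + \langle \mathbb{E}[g^t_{*}] - \nabla F, \nabla F \rangle \;\ge\; \|\nabla F\|^2 - \big\|\mathbb{E}[g^t_{*}] - \nabla F\big\|\,\|\nabla F\|,
\]
by Cauchy--Schwarz, and then apply Jensen's inequality, \(\|\mathbb{E}[g^t_{*}] - \nabla F\| \le \mathbb{E}\|g^t_{*} - \nabla F\| \le (\mathbb{E}\|g^t_{*} - \nabla F\|^2)^{1/2}\). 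Comparing with the target, it then suffices to prove \(\mathbb{E}\|g^t_{*} - \nabla F\|^2 \le \tfrac{2(n-f)}{k}d\sigma^2\sqrt{V/V'}\).

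Next I would control a single accepted gradient. By the acceptance rule, every accepted \(g_i^t\) satisfies \(s_i^t \le \bar s\) and \(d_i^t \ge \bar d\), where \(\bar s,\bar d\) are the normalized distance and dot product computed from the stored (warm‑up) median \(g_{\mathrm{med}}\) and validation gradient. Unfolding \(s_i^t \le \bar s\) and using A3 to pin the normalizers \(\|g^t_v\|\) and \(\|g_v\|\) between \(\sqrt{V'}\) and \(\sqrt V\) yields \(\|g_i^t - g^t_v\|^2 \le \sqrt{V/V'}\,\|g_{\mathrm{med}} - g_v\|^2\); the square root of this \(V/V'\) factor is exactly the source of the \((V/V')^{1/4}\) in the statement. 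For the stored median I would invoke the cited coordinate‑wise‑median property: when \(n>2f\) each coordinate of \(g_{\mathrm{med}}\) lies between the minimum and maximum of the honest coordinates, so \(\|g_{\mathrm{med}} - g_v\|^2 \le \sum_{i\in S_h}\|g_i - g_v\|^2\); since honest \(g_i\) and the validation gradient are independent with mean \(\nabla F\), taking expectations gives \(\mathbb{E}\|g_i - g_v\|^2 = 2d\sigma^2\) and hence \(\mathbb{E}\|g_{\mathrm{med}} - g_v\|^2 \le 2(n-f)d\sigma^2\), which is where the \(2(n-f)d\sigma^2\) enters.

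Finally I would assemble the aggregate: writing \(g^t_{*} - g^t_v = \tfrac1k\sum_{j}(g^t_{\mathrm{accepted}_j} - g^t_v)\), I would combine the per‑gradient bound above with convexity of \(\|\cdot\|^2\) and the (near‑)independence of the honest accepted summands to recover the \(1/k\) factor, then pass from \(g^t_v\) back to \(\nabla F\) and take expectations. The main obstacle is precisely this last step. The accepted set is selected \emph{jointly} by response order and by a value‑dependent filter, so the summands are neither i.i.d.\ nor unbiased, and some of them may be Byzantine; the argument must use \emph{both} filter conditions to show that even a Byzantine accepted gradient is pinned within \(O(\|g_{\mathrm{med}} - g_v\|)\) of the validation gradient, and it must use the uniform bounds of A3 to argue that the threshold stored at the warm‑up iteration still controls the current iteration despite staleness. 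Managing this selection bias and the cross terms in the averaging, while keeping the constants tight enough to land exactly on \(\sqrt{\tfrac{2(n-f)}{k}d\sigma^2}\,(V/V')^{1/4}\), is the delicate part; the rest is Cauchy--Schwarz, Jensen, and the triangle inequality.
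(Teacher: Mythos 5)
Your proposal follows essentially the same route as the paper's proof: the Cauchy--Schwarz ``ball'' reduction to \(\|\mathbb{E}[g^t_{*}]-\nabla F\|\), the use of the acceptance inequality together with A3 to get \(\|g_i^t-g_v^t\|^2 \le \sqrt{V/V'}\,\|g_m-g_v^1\|^2\), and the coordinate-wise-median argument giving \(\mathbb{E}\|g_m-g_v^1\|^2 \le 2(n-f)d\sigma^2\) are exactly the steps the appendix takes. One small ordering issue: you should replace \(\nabla F\) by \(g_v^t\) \emph{before} applying Jensen, i.e.\ use the unbiasedness of the validation gradient to write \(\mathbb{E}[g^t_{*}]-\nabla F=\mathbb{E}[g^t_{*}-g_v^t]\) and then bound \(\|\mathbb{E}[g^t_{*}-g_v^t]\|^2\le\mathbb{E}\|g^t_{*}-g_v^t\|^2\), as the paper does. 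Reducing to \(\mathbb{E}\|g^t_{*}-\nabla F\|^2\) directly, as your displayed inequality suggests, forces a triangle-inequality detour through \(g_v^t\) that picks up an extra \(d\sigma^2\) term and will not land on the stated constant.

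The step you flag as the delicate one --- recovering the \(1/k\) factor when averaging the accepted gradients --- is indeed the weak point, and you should know that the paper does not resolve it any more rigorously than you do: it simply asserts \(\mathbb{E}\|\frac{1}{k}\sum_{j}(g_j^t-g_v^t)\|^2 \le \frac{1}{k^2}\sum_{j}\mathbb{E}\|g_j^t-g_v^t\|^2\). That inequality requires the cross terms \(\mathbb{E}\langle g_i^t-g_v^t,\,g_j^t-g_v^t\rangle\) to be nonpositive, which is not justified: the summands all share the same \(g_v^t\), some may be Byzantine, and all are selected by a value-dependent filter, exactly the selection-bias and correlation issues you identify. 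The unconditionally valid convexity bound only yields \(\frac{1}{k}\sum_{j}\), i.e.\ a bound of \(2(n-f)\sqrt{V/V'}\,d\sigma^2\) without the \(1/k\) gain. So your diagnosis of where the difficulty lies is accurate; you are not missing an idea that the paper supplies, and closing that step honestly would require either an independence/orthogonality argument for the accepted residuals or a weaker final constant.
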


\begin{lemma}
(Proof in the appendix) Under assumptions A1 to A3, if \(g_*^t\) denotes the aggregated gradient for iteration \(t\), it is upper bounded by a linear combinations of \(\mathbb{E} \| G \|^{r1}, ...,\ \mathbb{E} \| G \|^{r_{n-1}}\)
\end{lemma}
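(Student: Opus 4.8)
The plan is to verify the second clause of $(\alpha,f)$-Byzantine resilience by proving something strictly stronger: that $g_*^t$ has \emph{deterministically bounded norm}, the bound being an explicit constant built only from $V$, $V'$, $n$ and $f$. Once $\|g_*^t\|\le M$ almost surely for such a constant $M$, every moment $\mathbb{E}\|g_*^t\|^r$ with $r\in\{2,3,4\}$ is at most $M^r$, and writing $M^r=\bigl(M^r/\mathbb{E}\|G\|^r\bigr)\,\mathbb{E}\|G\|^r\cdot\prod_{i=2}^{n-1}\mathbb{E}\|G\|^0$ exhibits it as a linear combination of products $\mathbb{E}\|G\|^{r_1}\cdots\mathbb{E}\|G\|^{r_{n-1}}$ with $r_1+\cdots+r_{n-1}=r$ (take $r_1=r$ and $r_2=\cdots=r_{n-1}=0$); the coefficient is finite because $\mathbb{E}\|G\|^r$ is finite by A3 and strictly positive.

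Recall $g_*^t=\frac1k\sum_{j=1}^k g^t_{\mathrm{accepted}_j}$, so by convexity of the norm it suffices to bound $\|g^t_i\|$ for every accepted gradient, whether it comes from an honest or a Byzantine worker. The first step is to control the historical filtering threshold. Let $\mathrm{med}$ and $g_{v,0}$ be the median gradient and validation gradient recorded during the warm-up iteration, so the Euclidean test accepts $g^t_i$ only if $\|g^t_i-g^t_v\|^2\le\tau_s\,\|g^t_v\|$ with $\tau_s=\|\mathrm{med}-g_{v,0}\|^2/\|g_{v,0}\|$. By the coordinate-wise robustness of the median (each coordinate of $\mathrm{med}$ lies between the smallest and largest honest values in that coordinate), $\|\mathrm{med}\|^2\le\sum_{j=1}^d\max_{i\in S_h}(g_i)_j^2\le\sum_{i\in S_h}\|g_i\|^2\le(n-f)V$ by A3; combined with $\sqrt{V'}\le\|g_{v,0}\|\le\sqrt V$ this gives $\tau_s\le(\sqrt{(n-f)V}+\sqrt V)^2/\sqrt{V'}=:C$, a finite deterministic constant.

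Next I bound the accepted gradients. For an accepted honest gradient, A3 gives $\|g^t_i\|\le\sqrt V$ directly. For an accepted Byzantine gradient, the Euclidean test forces $\|g^t_i-g^t_v\|^2\le C\|g^t_v\|\le C\sqrt V$, hence $\|g^t_i\|\le\|g^t_v\|+\sqrt{C\sqrt V}\le\sqrt V+\sqrt C\,V^{1/4}$. Thus every accepted gradient satisfies $\|g^t_i\|\le M:=\sqrt V+\sqrt C\,V^{1/4}$, and therefore $\|g_*^t\|\le\frac1k\sum_{j=1}^k\|g^t_{\mathrm{accepted}_j}\|\le M$ almost surely. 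Plugging this into the reduction of the first paragraph closes the argument.

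The main obstacle — and essentially the only place genuine work is needed — is the second step: one must be sure that the Euclidean test \emph{alone} (the dot-product test plays no role here) really does impose a deterministic norm bound on \emph{adversarial} accepted gradients, and that the ``historical'' quantities entering the threshold are a.s.\ bounded by a fixed number rather than being unbounded random iterates, so that $C$ is a legitimate constant. Everything afterwards — the moment bound and its cosmetic rewriting as a linear combination of moments of $G$ — is routine, exactly as in the corresponding argument for KRUM in \cite{Blanchard2017MLAdversaries}.
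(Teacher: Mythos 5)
Your proof is correct, but it takes a genuinely different route from the paper's. The paper bounds each accepted gradient by $\|g_j^t\|\le\|g_v^t\|+(V/V')^{1/4}\|g_m-g_v^1\|$ and then controls $\|g_m-g_v^1\|$ by a \emph{sum of random honest-gradient norms} $\sum_{\text{correct }i}\bigl(\|g_i^1\|+\|g_v^1\|\bigr)$, so that raising to the $r$-th power and expanding (the KRUM-style multinomial argument, using independence across workers) yields a linear combination of products $\mathbb{E}\|G\|^{r_1}\cdots\mathbb{E}\|G\|^{r_{n-1}}$ with genuinely nontrivial exponent tuples. You instead push Assumption A3 all the way through: the two-sided bounds $\|g_i\|^2\le V$ and $\|g_v\|^2\ge V'$ turn the historical threshold into a deterministic constant $C$, and the Euclidean acceptance test then forces \emph{every} accepted gradient (honest or adversarial) to satisfy $\|g_i^t\|\le M$ almost surely, whence $\mathbb{E}\|g_*^t\|^r\le M^r$. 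This is strictly stronger than what the lemma asks for and is more elementary — no moment expansion is needed — but it leans entirely on the boundedness in A3, whereas the paper's version would survive if A3 were relaxed to a moment condition on $\|G\|$. Your final cosmetic step, writing $M^r=(M^r/\mathbb{E}\|G\|^r)\,\mathbb{E}\|G\|^r$, produces a coefficient that depends on the distribution $G$ itself, which is slightly against the spirit of the Blanchard et al.\ definition (where the coefficients are combinatorial constants); since a deterministic bound trivially implies the moment control that condition (ii) exists to provide, this is a stylistic rather than a substantive defect. Both proofs correctly observe that only the Euclidean-distance test is needed here and that the median's coordinate-wise robustness is what keeps the threshold honest.
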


Given the two lemmas, \textit{dSTAR} is \((\alpha-f)\)-Byzantine resilient under the optimal breakdown point:

\begin{theorem}
Let \(g^t_1, \ldots, g^t_n, g^t_v\) be i.i.d. \(d\)-dimensional gradients at iteration \(t\) such that \(g^t_i \sim G\), with \(\mathbb{E}[G] = \nabla F\) and \(\mathbb{E}\|G - \nabla F\|^2 = d\sigma^2\). \(f\) of \(\{g^t_1, \ldots, g^t_n\}\) are replaced by arbitrary values. The \(\text{dSTAR}\) function selects and aggregates \(g^t_1, \ldots, g^t_k\) where \(k \leq n\). If \(n > 2f\) and \(\sqrt{\frac{2(n-f)}{k}d\sigma^2} \left(\frac{V}{V^{'}}\right)^{\frac{1}{4}} < \|\nabla F\|\), then the \(\text{dSTAR}\) function is \((\alpha, f)\)-Byzantine resilient where \(0 \leq \alpha < \frac{\pi}{2}\) is defined by:
\begin{equation}
\begin{split}
\sin \alpha = \frac{\sqrt{\frac{2(n-f)}{k}d\sigma^2} \left(\frac{V}{V^{'}}\right)^{\frac{1}{4}}}{\|\nabla F\|}
\end{split}
\end{equation}
\end{theorem}

\begin{remark}
The condition on the norm of the gradient is standard in Byzantine resilience analysis \cite{Blanchard2017MLAdversaries}. It can be satisfied at least to some extent by computing gradients using mini-batches on workers. Averaging gradients over a mini-batch divides \( \sigma \) by the squared root of the mini-batch size \cite{Bottou1998OnlineLearning}. 
\end{remark}

\subsection{Convergence analysis}
\begin{theorem}
(Proof in appendix) Assume \(F(\theta)\) is L smooth, and there exists a global minimum \(\theta^*\) where \(F(\theta^*) \leq F(\theta)\ \forall_\theta\), then after training for \(T\) iterations, \textit{dSTAR} has expected error bound: \( \mathbb{E} \left[ F(\theta^*) - F(\theta^0) \right] \leq \sum_{t = 0}^{T} -\eta \frac{V}{V^{'}} \|\nabla F(\theta^{1})\|^2 + \mathbb{O} (V + d\sigma^2) \) where \(\nabla F(\theta^{t'})\) represents the honest gradient at certain iteration \(t'\).
\end{theorem}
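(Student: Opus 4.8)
The plan is to run the standard descent analysis for an $L$-smooth objective, but to control the per-step correlation $\langle \nabla F(\theta^t),\, g^t_{\mathrm{agg}}\rangle$ through the dot-product acceptance test of \textit{dSTAR} rather than through an unbiasedness argument. First I would apply $L$-smoothness to the update $\theta^{t+1} = \theta^t - \eta g^t_{\mathrm{agg}}$ to obtain the one-step inequality
\[
F(\theta^{t+1}) \;\le\; F(\theta^t) \;-\; \eta\,\langle \nabla F(\theta^t),\, g^t_{\mathrm{agg}}\rangle \;+\; \tfrac{L\eta^2}{2}\,\|g^t_{\mathrm{agg}}\|^2 ,
\]
sum it over $t = 0,\dots,T$, and use the global-minimum assumption $F(\theta^*) \le F(\theta^{T+1})$ to get
\[
F(\theta^*) - F(\theta^0) \;\le\; \sum_{t=0}^{T}\Big(\!-\eta\,\langle \nabla F(\theta^t),\, g^t_{\mathrm{agg}}\rangle + \tfrac{L\eta^2}{2}\,\|g^t_{\mathrm{agg}}\|^2\Big).
\]
Taking expectations, the whole proof reduces to bounding the two summands uniformly in $t$.

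For the curvature term, $g^t_{\mathrm{agg}}$ is an average of accepted gradients and every accepted gradient---honest or adversarial---has squared norm at most $V$ by A3, so Jensen's inequality gives $\|g^t_{\mathrm{agg}}\|^2 \le V$; this feeds the $\mathbb{O}(V)$ part of the residual. For the correlation term I would exploit the acceptance rule directly: each accepted $g^t_j$ satisfies $\langle g^t_j/\|g^t_v\|,\, g^t_v/\|g^t_v\|\rangle \ge d^{1}_{\mathrm{med}}$, where $d^{1}_{\mathrm{med}}$ is the warm-up threshold computed from the MEDIAN aggregate of the first iteration, so after averaging $\langle g^t_{\mathrm{agg}}, g^t_v\rangle \ge d^{1}_{\mathrm{med}}\,\|g^t_v\|^2$. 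I then pass from the validation gradient to the true gradient: conditioning on $\theta^t$, A1 gives $\mathbb{E}[g^t_v]=\nabla F(\theta^t)$ and $\mathbb{E}\|g^t_v-\nabla F(\theta^t)\|^2 = d\sigma^2$, so a Cauchy--Schwarz / Young split converts the lower bound on $\langle g^t_{\mathrm{agg}}, g^t_v\rangle$ into a lower bound on $\mathbb{E}\langle \nabla F(\theta^t),\, g^t_{\mathrm{agg}}\rangle$ at the cost of an additive $\mathbb{O}(d\sigma^2)$ term. The cited robustness of the coordinate-wise median under the optimal breakdown point $n>2f$ lets me lower-bound $d^{1}_{\mathrm{med}}$ by a quantity governed purely by the honest gradients of the warm-up iteration (hence by $\|\nabla F(\theta^1)\|^2$ up to the variance error), while the bounds $V' \le \|g^t_v\|^2$ and $\|g^1_v\|^2 \le V$ from A3 are exactly what is used when un-normalizing, producing the multiplicative $\tfrac{V}{V'}$ factor that appears in the statement. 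Substituting these two bounds into the telescoped inequality, summing over the $T+1$ iterations, and absorbing $L$, $\eta$ and all lower-order contributions into $\mathbb{O}(V + d\sigma^2)$ gives the stated error bound.

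The main obstacle is the correlation step: the accepted set is itself random and statistically coupled to $g^t_v$ (whether a given honest $g^t_j$ passes the filter depends on the realized $g^t_v$), so taking expectations naively is not valid. I would handle this by conditioning on $(\theta^t, g^t_v)$, at which point the acceptance inequality bounds the conditional inner product deterministically, and only then integrate over $g^t_v$ using A1. A secondary technical point is rounds in which fewer than $k$ gradients clear the filter (or the adversary simply withholds replies): here I would argue from the warm-up calibration together with the median-range property that the honest ``median-like'' gradients always survive the test, so $g^t_{\mathrm{agg}}$ is always an average of at least one good gradient and the bounds above go through with $k$ replaced by the actual count. Everything else---the descent lemma, Jensen, the telescoping, and the Cauchy--Schwarz manipulations---is routine bookkeeping.
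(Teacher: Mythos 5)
Your proposal follows essentially the same route as the paper's appendix proof: the $L$-smoothness descent lemma, a norm bound on the aggregated gradient from A3, control of the inner product by splitting $\nabla F(\theta^t)$ through the validation gradient $g^t_v$ (Cauchy--Schwarz/Young at a cost of $\mathbb{O}(d\sigma^2)$), a lower bound on the acceptance threshold via the coordinate-wise median's range property yielding $\|\nabla F(\theta^1)\|^2$ with the $V/V'$ un-normalization factor, and telescoping. If anything, your explicit conditioning on $(\theta^t, g^t_v)$ to handle the coupling between the accepted set and the validation gradient is more careful than the paper, which takes expectations without addressing that dependence.
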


\section{Experiments}
In this section, we detail the empirical evaluation of \textit{dSTAR}. We evaluated the algorithm and other synchronous GARs on two standard image classification benchmarks: Fashion-MNIST and CIFAR10, with LeNet-5 and ResNet18 architectures respectively. We assessed the resilience of each algorithm by subjecting them to two state-of-the-art Byzantine attacks: 
\begin{itemize}
    \item \textbf{``Little" \cite{Baruch2019ALittleIsEnough}:} The attack disrupts the median gradient computation by introducing spurious gradients that cluster around the mean. Specifically, given \(N\) workers in which \(f\) workers are Byzantine, the attack: 1). computes the number of required workers for a majority \(s = \left\lfloor \frac{N}{2} + 1 \right\rfloor - f\); 2). calculates the maximum \(z\)-value, \(z_{\max}\), from the standard normal distribution such that the cumulative probability \(\phi(z) < \frac{N - s}{N}\); 3). generates a malicious gradient \(g_{\text{mal}} = \mu + z_{\max} \cdot \sigma\), using the mean \(\mu\) and standard deviation \(\sigma\) of non-Byzantine gradients.
    \item \textbf{``Empire" \cite{Xie2020FallOfEmpires}:} The attack employs inner product manipulation to break Byzantine-tolerant GARs. The attack uses the fact that, for gradient descent algorithms to guarantee the descent of the loss, the inner product between the true gradient and the aggregated gradient must be non-negative. Hence, malicious gradients can be generated to make the aggregated gradient point in the opposite direction as the true gradient \( (g_{\text{mal}} = -s \mu) \) where \(\mu\) is the honest gradient mean and \(s\) is a configurable scaling factor.
\end{itemize}
We simulate a distributed environment with 25 workers and a Byzantine ratio of \(35\%\). Each worker contains a unique subset of the dataset, comprising random samples across all classes. Network delays are modeled using an exponential distribution with rate \(\beta\). Honest workers have \(\beta = 0.2\) and Byzantine workers have \(\beta = 0.001\). The value of \(\beta\) makes no difference for synchronous GARs because they need to wait for all nodes, but for \textit{dSTAR} it makes faulty workers significantly more likely to be in the fastest \(k\), thereby exposing the vulnerability of vanilla fastest-\(k\) algorithm. For \textit{dSTAR}, the initial \(k\) is set as 8, and the time to aggregate gradients in each iteration will be the time to accept \(k\) gradients or the maximum response time from all nodes if our filter cannot accept \(k\) gradients, in which case \textit{dSTAR} waits for all nodes to return but only aggregate the accepted ones. For all experiments, we used the Adam optimizer with an initial learning rate of 0.001. The preprocessing steps for Fashion-MNIST included converting images into tensors and normalizing them. For CIFAR10, images are padded on all sides with 4 pixels, randomly cropped into 32 \(\times\) 32 pixels, randomly flipped horizontally, and converted to tensors and normalized. Additionally, for CIFAR10, we implemented a cosine annealing scheduler to adjust the learning rate, with a minimum rate set at 0.0001. We also utilized Mixup for data augmentation with a parameter \(\alpha\) of 0.4. These preprocessing are added to make the fault-free baseline comparable to SOTA for accurate comparisons.

\section{Results}
In three of the four experiments, \textit{dSTAR} achieved top accuracy (see Table \ref{tab:empire_little_accuracy} and \ref{tab:empire_little_none_accuracy}). Furthermore, \textit{dSTAR} maintains a consistent performance across different Byzantine attacks, whereas other synchronous GARs may have up to 40-50\(\%\) drop between the two attacks. This uniformity in performance under various adversarial conditions underscores the robustness and generalized ability of \textit{dSTAR}. The performance of \textit{dSTAR} is particularly notable under the "Empire" attack scenarios, where it is the only algorithm that converges. The full training curves can be found in the Appendix Figures \ref{fig:fashion_empire_attack} to \ref{fig:cifar_little_attack}. 

Additionally, the goal of designing a fastest-\(k\) Byzantine resilient SGD is to mitigate the straggler effect. It has been shown in Table \ref{tab:time_comparison} that the selective waiting strategy for \(k\) fastest gradients significantly reduces the time required for gradient aggregation per iteration. The reduced wait times can contribute to higher throughput and efficiency, making \textit{dSTAR} particularly suited for time-sensitive applications.

The only setting where our proposed algorithm didn't achieve the best accuracy was on CIFAR10 under the "Little" attack, although the performance is still significantly better than TRMEAN and KRUM and is only 2\(\%\) lower than CGE. This can be explained by a tradeoff between accuracy and speed, as the accuracy will almost surely improve by waiting for more workers at the cost of a longer waiting time per iteration. Additionally, we default to MEDIAN for the initial iteration, which can be susceptible to the attack since "Little" was designed specifically for MEDIAN GAR. Choosing a different synchronous GAR for the initial iteration or having a longer warm-up phase may also improve performance.

\begin{table}[h]
  \caption{Fashion-MNIST accuracies of methods under Little and Empire attacks, including the fault-free baseline.}
  \label{tab:empire_little_accuracy}
  \centering
  \begin{tabular}{lccc}
    \toprule
    \textbf{Method} & \textbf{Little (\%)} & \textbf{Empire (\%)} & \textbf{Fault-Free (\%)} \\
    \midrule
    dSTAR              & \textbf{88.78} & \textbf{88.87} & 88.86 \\
    Trmean             & 16.55 & 32.48 & 89.44 \\
    Krum               & 88.19 & 40.84 & 88.22 \\
    CGE                & 88.30 & 82.44 & 89.47 \\
    Aksel              & 88.51 & 75.08 & 88.67 \\
    Average            & -    & -    & \textbf{89.65} \\
    \bottomrule
  \end{tabular}
\end{table}

\begin{table}[h]
  \caption{CIFAR10 accuracies of methods under Empire and Little attacks. The dashed columns indicate that the algorithm failed to converge. The fault-free baseline has an accuracy of 94.33\%.}
  \label{tab:empire_little_none_accuracy}
  \centering
  \begin{tabular}{lccc}
    \toprule
    \textbf{Method} & \textbf{Empire (\%)} & \textbf{Little (\%)} & \textbf{Fault-Free (\%)} \\
    \midrule
    dSTAR              & 91.11 & \textbf{91.60} & 91.23 \\
    Trmean             & 20.50 & 11.85 & 93.72 \\
    Krum               & 76.32 & 10.00 & 80.38 \\
    CGE                & \textbf{93.45} & 41.32 & 94.19 \\
    Aksel              & 92.44 & 46.62 & 93.64 \\
    Average            & -     & -     & \textbf{94.33} \\
    \bottomrule
  \end{tabular}
\end{table}

\begin{table}[h]
  \caption{Average time between iterations for synchronous GARs and \textit{dSTAR}}
  \label{tab:time_comparison}
  \centering
  \begin{tabular}{lc}
    \toprule
    \textbf{GAR} & \makecell{\textbf{Average Time}\\\textbf{Between Iterations (s)}} \\
    \midrule
    Synchronous GAR & 7.62 \\
    dSTAR           & \textbf{3.79} \\
    \bottomrule
  \end{tabular}
\end{table}

\section{Discussion and conclusion}
We introduced \textit{dSTAR}, a novel Byzantine resilient distributed SGD algorithm that effectively balances the dual challenges of mitigating straggler effects and defending against adversarial Byzantine attacks in synchronous settings. The experimental results demonstrated that \textit{dSTAR} is robust to various adversarial settings, whereas other synchronous GARs can have performance degradation when facing different Byzantine attacks. The ability of \textit{dSTAR} to deliver such results highlights its potential as a reliable solution for securing distributed SGD processes against an array of threats while ensuring minimal disruption to operational efficiency. 

Future work could involve scaling the experiments to more complex models and datasets to provide a more comprehensive understanding of the algorithm's performance and potential adjustments. Extending our experiments will help ascertain the generalizability of our findings across various domains and applications. Moreover, the integration of \textit{dSTAR} with emerging machine learning paradigms, such as federated learning, represents a promising research direction as well.

\bibliographystyle{unsrt} 
\bibliography{references} 






\appendix

\section{Appendix}

\subsection{Byzantine resilience analysis}
First, it is important to point out the robustness of the median. Formally, the median value given the optimal breakdown point is always bounded by two honest values and is Byzantine resilient. We restate Lemma 4 from \cite{Xie2018GeneralizedByzantineTolerantSGD} without proof:

\begin{lemma}
For a sequence composed of \( f \) Byzantine values and \( n - f \) honest values \( x_1, x_2, \ldots, x_{n-f} \), if \( f \leq \left\lceil \frac{n}{2} \right\rceil - 1 \) (the honest values dominate the sequence), then the median value \( m \) of this sequence satisfies \( m \in [x_{\min}, x_{\max}] \).
\end{lemma}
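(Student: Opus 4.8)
The plan is to argue by contradiction using the elementary counting characterization of a median. First I would record the combinatorial fact that for any multiset of $n$ real numbers, its median $m$ (under either the lower-median convention or the averaging convention) satisfies: at least $\lceil n/2 \rceil$ of the numbers are $\le m$, and at least $\lceil n/2 \rceil$ of the numbers are $\ge m$. This is seen by sorting the entries $y_{(1)} \le \cdots \le y_{(n)}$ and inspecting positions $1,\dots,\lceil n/2\rceil$ and $\lceil n/2\rceil,\dots,n$, splitting into the cases $n$ odd and $n$ even; in the even case one uses $y_{(n/2)} \le m \le y_{(n/2+1)}$.

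Next I would translate the majority hypothesis into a usable inequality: $f \le \lceil n/2 \rceil - 1$ is equivalent to $n - f \ge \lfloor n/2 \rfloor + 1$, so the $n-f$ honest values form a strict majority of the sequence, and in particular $f < \lceil n/2 \rceil$.

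Then I would suppose for contradiction that $m < x_{\min}$. Every honest value is $\ge x_{\min} > m$, so the set of sequence entries that are $\le m$ contains no honest value and hence has size at most $f < \lceil n/2 \rceil$, contradicting the counting fact that at least $\lceil n/2 \rceil$ entries are $\le m$. The case $m > x_{\max}$ is symmetric: every honest value is $\le x_{\max} < m$, so at most $f < \lceil n/2 \rceil$ entries are $\ge m$, contradicting that at least $\lceil n/2 \rceil$ entries are $\ge m$. Therefore $x_{\min} \le m \le x_{\max}$.

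I do not expect a real obstacle here. The only point demanding care is phrasing the median-counting fact so that it is valid both for even and odd $n$ and for whichever median convention is in force, since the rest of the argument uses nothing more than "at least $\lceil n/2 \rceil$ entries lie on each side of $m$." For the coordinate-wise median of vectors used elsewhere in the paper, one simply applies this scalar statement in each coordinate separately, with $x_{\min}$ and $x_{\max}$ being the per-coordinate extremes of the honest vectors.
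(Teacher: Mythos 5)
Your proof is correct. Note that the paper itself offers no argument to compare against: it explicitly restates this lemma from Xie et al.\ (2018) ``without proof,'' so your write-up actually supplies a proof the paper omits. Your route is the standard one — the counting fact that at least $\lceil n/2 \rceil$ entries of the sequence lie on each side of the median, combined with the hypothesis $f \le \lceil n/2 \rceil - 1$, shows that if $m$ fell outside $[x_{\min}, x_{\max}]$ then one side of $m$ would have to be populated entirely by Byzantine values, which is impossible by cardinality. The two points you flag as needing care are exactly the right ones: the counting fact must be verified for both parities of $n$ (and your even-case bound via $y_{(n/2)} \le m \le y_{(n/2+1)}$ handles the averaging convention, while the lower-median convention only strengthens the count on the $\ge m$ side), and the coordinate-wise extension is immediate since the scalar statement is applied independently in each dimension, which is precisely how the paper uses it in bounding $(g_m)_j$ between the honest per-coordinate extremes.
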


For a sequence of higher-dimensional vectors, the coordinate-wise median maintains the same robustness properties \cite{Xie2018GeneralizedByzantineTolerantSGD}. Specifically, the median for each coordinate will always lie within the range defined by the minimum and maximum values of the honest coordinates for that dimension. Following this lemma, we illustrate that \textit{dSTAR} is Byzantine resilient. For the first iteration, we default to MEDIAN aggregator which is already Byzantine resilient. For any subsequent iteration \(t\), we accept a gradient if its normalized Euclidean distance to the validation gradient of iteration \(t\) is not greater than the normalized Euclidean distance of the first iteration coordinate-wise median to the first iteration validation gradient. If we denote the first iteration coordinate-wise median as \(g_m\), the first iteration validation gradient as \(g_{v}^1\), the \(t\)-th iteration validation gradient \(g_v^t\), and an arbitrary gradient received from worker \(i\) during iteration \(t\) as \(g_i^t\), then we accept \(g_i^t\) if the following two inequalities hold:

\begin{equation}
\begin{split}
\frac{\|g_i^t - g_v^t\|^2}{\|g_v^t\|} \leq \frac{\|g_m - g_v^1\|^2}{\|g_v^1\|}
\end{split}
\end{equation}

\begin{equation}
\begin{split}
\langle \frac{g_m}{\|g_v^1\|}, \frac{g_v^1}{\|g_v^1\|} \rangle \leq \langle \frac{g_i^t}{\|g_v^t\|}, \frac{g^t_v}{\|g^t_v\|} \rangle
\end{split}
\end{equation}

From (3), we have:

\begin{gather}
\|g_i^t - g_v^t\|^2 \leq \frac{\|g_v^t\|}{\|g_v^1\|} \|g_m - g_v^1\|^2
\end{gather}

Assume all gradients are d-dimensional and come from the same distribution \(G\) where \(\mathbb{E} \left[ G_i - \nabla F_i \right]^2 = \sigma_i^2 \) and \(\mathbb{E} \| G - \nabla F \|^2 = \mathbb{E} \sum_{i = 1}^{d} \left[ G_i - g_i \right]^2 = d\sigma^2 \) , we have:

\begin{equation}
\begin{split}
\mathbb{E} \|g_m - g_v^1\|^2 &= \mathbb{E} [\sum_{j = 1}^d ((g_m)_j - (g^1_v)_j)^2] \\
&= \sum_{j = 1}^d \mathbb{E} [(g_m)_j - (g^1_v)_j]^2
\end{split}
\end{equation}

where \((g_m)_j\) represents the j-th dimension of the vector. Since \(g_m\) is the coordinate-wise median over first iteration gradients, we have \((g_m)_j \in \left[ \min_{\text{correct } i} (g^1_i)_j, \max_{\text{correct } i} (g^1_i)_j \right]\). We thus have:

\begin{equation}
\begin{split}
\mathbb{E} [(g_m)_j - (g_v^1)_j]^2 &\leq \mathbb{E} \left[ \max_{\text{correct } i} \left( (g^1_i)_j - (g_v^1)_j \right)^2 \right] \\
&\leq \mathbb{E} [\sum_{\text{correct } i} ((g^1_i)_j - (g_v^1)_j)^2] \\
&= \sum_{\text{correct } i} \mathbb{E} [((g^1_i)_j - (g_v^1)_j)^2] \\
&= (n - f) \mathbb{E} [((g^1_i)_j - (g_v^1)_j)^2] \\
&= (n - f) 2\sigma_j^2
\end{split}
\end{equation}

Thus, we can plug this back to (6) and obtain:

\begin{equation}
\begin{split}
\mathbb{E} \|g_m - g_v^1\|^2 &= \sum_{j = 1}^d \mathbb{E} [(g_m)_j - (g^1_v)_j]^2 \\
&\leq \sum_{j = 1}^d 2(n - f) \sigma^2_j \\
&\leq 2(n - f) d\sigma^2
\end{split}
\end{equation}

With Assumption A3, this gives us an upper bound for the expectation of (5):

\begin{equation}
\begin{split}
\mathbb{E} \|g^t_i - g^t_v\|^2 &\leq \mathbb{E} \frac{\|g_v^t\|}{\|g_v^1\|} \|g_m - g_v^1\|^2 \\
&\leq 2(n - f) \frac{\sqrt{V}}{\sqrt{V^{'}}}  d\sigma^2
\end{split}
\end{equation}

Now, we begin to prove the Byzantine Resilience of our algorithm. 

\begin{theorem}
Let \(g^t_1, \ldots, g^t_n, g^t_v\) be i.i.d. \(d\)-dimensional gradients at iteration \(t\) such that \(g^t_i \sim G\), with \(\mathbb{E}[G] = \nabla F\) and \(\mathbb{E}\|G - \nabla F\|^2 = d\sigma^2\). \(f\) of \(\{g^t_1, \ldots, g^t_n\}\) are replaced by arbitrary values. The \(\text{dSTAR}\) function selects and aggregates \(g^t_1, \ldots, g^t_k\) where \(k \leq n\). If \(n > 2f\) and \(\sqrt{\frac{2(n-f)}{k}d\sigma^2} \left(\frac{V}{V^{'}}\right)^{\frac{1}{4}} < \|\nabla F\|\), then the \(\text{dSTAR}\) function is \((\alpha, f)\)-Byzantine resilient where \(0 \leq \alpha < \frac{\pi}{2}\) is defined by:
\begin{equation}
\begin{split}
\sin \alpha = \frac{\sqrt{\frac{2(n-f)}{k}d\sigma^2} \left(\frac{V}{V^{'}}\right)^{\frac{1}{4}}}{\|\nabla F\|}
\end{split}
\end{equation}
\end{theorem}

\begin{proof}
We first focus on the condition (i) of Byzantine Resilience. Suppose we denote the final aggregated gradient during iteration \(t\) as \(g^t_{*}\), we want to determine an upper bound on \(\|\mathbb{E}[g^t_{*}] - \nabla F \|^2 \). If Assumption 1 holds, we have:

\begin{equation}
\begin{split}
\|\mathbb{E}[g^t_{*}] - \nabla F \|^2 &\leq \|\mathbb{E}\left( \ g^t_{*} - g^t_v \right)\|^2 \\
&\leq \mathbb{E} \|\ g^t_{*} - g^t_v \|^2 \\
&= \mathbb{E} \| \frac{1}{k} \sum_{j = 1}^{k} (g^t_j - g^t_v) \|^2 \\
& \leq \frac{1}{k^2} \sum_{j = 1}^{k} \mathbb{E} \|g^t_j - g^t_v\|^2 \\
&\leq \frac{2(n - f)}{k} \frac{\sqrt{V}}{\sqrt{V^{'}}} d\sigma^2
\end{split}
\end{equation}

If \( \sqrt{\frac{2(n-f)}{k}d\sigma^2} (\frac{V}{V^{'}})^{\frac{1}{4}} \leq \| \nabla F \| \), \( \mathbb{E}[g^t_{*}]\) belongs to a ball centered at \( \nabla F \) with radius \(\sqrt{\frac{2(n-f)}{k}d\sigma^2} (\frac{V}{V^{'}})^{\frac{1}{4}} \). This implies:

\begin{equation}
\begin{split}
\langle \mathbb{E}[g^t_{*}], \nabla F \rangle &\geq \left( \| \nabla F \| - \sqrt{\frac{2(n-f)}{k}d\sigma^2} (\frac{V}{V^{'}})^{\frac{1}{4}} \right) \| \nabla F \| \\
&= (1 - sin \alpha) \| \nabla F \|^2
\end{split}
\end{equation}

So condition (i) of Byzantine Resilience holds when \( \sqrt{\frac{2(n-f)}{k}d\sigma^2} (\frac{V}{V^{'}})^{\frac{1}{4}} \leq \| \nabla F \| \). Now we focus on condition (ii). For an accepted gradient \(g_j^t\) at iteration \(t\) with validation gradient \(g_v^t\), there exists a constant \(C\) such that:

\begin{equation}
\begin{split}
\|g_j^t\| &\leq \|g_j^t - g_v^t\| + \|g_v^t\| \\
&\leq (\frac{V}{V^{'}})^{\frac{1}{4}} \|g_m - g_v^1\| + \|g_v^t\|
\end{split}
\end{equation}

\begin{equation}
\begin{split}
\|g_m - g_v^1\| &= \sqrt{\sum_{j = 1}^d [(g_m)_j - (g_v^1)_j]^2} \\
&\leq \sqrt{\sum_{j = 1}^d \max_{\text{correct } i} [(g_i^1)_j - (g_v^1)_j]^2} \\
&\leq \sqrt{\sum_{j = 1}^d \sum_{\text{correct } i} [(g_i^1)_j - (g_v^1)_j]^2} \\
&\leq \sqrt{\sum_{\text{correct } i}  \|g_i^1 - g_v^1\|^2} \\
&\leq C \sum_{\text{correct } i} \|g_i^1 - g_v^1\| \\
&\leq C \sum_{\text{correct } i} \|g_i^1\| + \|g_v^1\|
\end{split}
\end{equation}

Putting this back to (13), we have:

\begin{equation}
\begin{split}
\|g_k^t\| &\leq \|g_v^t\| + C \sum_{\text{correct } i} \|g_i^1\| + \|g_v^1\|
\end{split}
\end{equation}

Since all terms on the right side are from correct gradients, we can conclude that the norm of each accepted gradient can be bounded by the norm of honest gradients. By triangle inequality, \(\|g_{*}^t\| = \|\frac{1}{k} \sum_{k} g_k^t\| \leq \frac{1}{k} \sum_{k} \|g_k^t\|\). So \(\mathbb{E} \|g_{*}^t\|^r \) is upper bounded by linear combinations of \(\mathbb{E} \| G \|^{r1}, ...,\ \mathbb{E} \| G \|^{r_{n-1}}\). Because both conditions are met, we can conclude that dSTAR is Byzantine resilient.
\end{proof}

\subsection{Convergence analysis}
For iteration \(t\), we denote the k gradients \textit{dSTAR} collects as \(g^t = \{g^t_1, g^t_2, ..., g^t_k\}\) and the validation gradient is \(g_v^t\). From our assumptions and Byzantine Resilience proof, we know \(V' \leq \|g_v^t\|^2 \leq V\) and \(\|g^t\|^2 \leq CV \) for some constant C. Assume \(F(\theta)\) captures the loss of \(\theta\) and is L smooth, and there exists a global minimum \(\theta^*\) where \(F(\theta^*) \leq F(\theta)\ \forall_\theta\), we want to find the error bound for the expected difference \(\mathbb{E} [F(\theta^T) - F(\theta^*)] \) after training our model for \(T\) iterations, which can be derived using a similar approach as \cite{Xie2020Zeno}. \\

From smoothness, we have:
\begin{equation}
\begin{split}
F(\theta^t) \leq & F(\theta^{t - 1}) + \langle \nabla F(\theta^{t - 1}), \theta^t - \theta^{t - 1} \rangle + \frac{L}{2} \| \theta^t - \theta^{t - 1} \|^2 \\
\end{split}
\end{equation}

For gradient descent update, \(\theta^t = \theta^{t - 1} - \eta g^t \):

\begin{equation}
\begin{split}
F(\theta^t) &\leq F(\theta^{t - 1}) + \langle \nabla F(\theta^{t - 1}), - \eta g^t \rangle + \frac{L}{2} \| - \eta g^t \|^2 \\
&\leq F(\theta^{t - 1}) + \langle \nabla F(\theta^{t - 1}), - \eta g^t \rangle + \frac{L\eta^2}{2} \| g^t \|^2 \\
&\leq F(\theta^{t - 1}) + \langle \nabla F(\theta^{t - 1}), - \eta g^t \rangle + \frac{L\eta^2}{2} CV,
\end{split}
\end{equation}

Now we focus on the dot product term:

\begin{equation}
\begin{split}
\langle \nabla F(\theta^{t - 1}), - \eta g^t \rangle &= \langle \nabla F(\theta^{t - 1}) - g^{t}_{v} + g^{t}_{v}, - \eta g^t \rangle \\
&= \langle \nabla F(\theta^{t - 1}) - g^{t}_{v}, - \eta g^t \rangle + \langle g^{t}_{v}, - \eta g^t \rangle \\
&\leq \eta \|\nabla F(\theta^{t - 1}) - g^t_v\|\ \|g^t\| + \langle g^t_v, - \eta g^t \rangle \\
&\leq \frac{\eta}{2} \|\nabla F(\theta^{t - 1}) - g^t_v\|^2 + \frac{\eta}{2} CV + \langle g^t_v, - \eta g^t \rangle
\end{split}
\end{equation}

Using triangle inequality, we know:

\begin{equation}
\begin{split}
\|\nabla F(\theta^{t - 1}) - g^t_v\|^2 &\leq 2\|\nabla F(\theta^{t - 1}) - \nabla F(\theta^{t})\|^2 + \\
&\quad 2\|\nabla F(\theta^{t}) - g^t_v\|^2
\end{split}
\end{equation}

\begin{equation}
\begin{split}
\langle \nabla F(\theta^{t - 1}), - \eta g^t \rangle &\leq \eta \|\nabla F(\theta^{t - 1}) - \nabla F(\theta^t)\|^2 + \\
&\quad \eta \|\nabla F(\theta^t) - g^t_v\|^2 + \frac{\eta}{2} CV + \\
&\quad \langle g^t_v, - \eta g^t \rangle
\end{split}
\end{equation}

From Assumption 1, \(\mathbb{E} \left\| \nabla F(\theta^t) - g^t_v \right\|^2 \leq d\sigma^2 \):
\begin{equation}
\begin{split}
\langle \nabla F(\theta^{t - 1}), - \eta g^t \rangle &\leq \eta \|\nabla F(\theta^{t - 1}) - \nabla F(\theta^t)\|^2 + \\
&\quad \eta d\sigma^2 + \frac{\eta}{2} CV + \langle g^t_v, - \eta g^t \rangle
\end{split}
\end{equation}

Using smoothness, we know:
\begin{equation}
\begin{split}
\|\nabla F(\theta^{t - 1}) - \nabla F(\theta^t)\|^2 &\leq L^2 \|\theta^{t - 1} - \theta^t\|^2 \\
&\leq L^2 \|\eta g^t\|^2 \\
&\leq L^2\eta^2 CV
\end{split}
\end{equation}

Now \(\langle \nabla F(\theta^{t - 1}), - \eta g^t \rangle\) is upper bounded by:
\begin{equation}
\begin{split}
\langle \nabla F(\theta^{t - 1}), - \eta g^t \rangle &\leq L^2\eta^3CV + \eta d\sigma^2 + \frac{\eta}{2} CV + \langle g^t_v, - \eta g^t \rangle
\end{split}
\end{equation}

From (4), we know the dot product \( \langle g^t_k, g^t_v \rangle \) for each accepted gradient \(g^t_k\) is guaranteed to be lower bounded by \( \frac{V}{V^{'}} \langle g_{m}, g^{1}_{v} \rangle \). We have:

\begin{equation}
\begin{split}
\langle g^t_v, g^t \rangle &= \frac{1}{k} \sum_{j = 1}^{k} \langle g^t_v, g^t_j \rangle \\
& \geq \frac{1}{k} \sum_{j = 1}^{k} \frac{V}{V^{'}} \langle g_{m}, g^{1}_{v} \rangle \\
&= \frac{V}{V^{'}} \langle g_{m}, g^{1}_{v} \rangle
\end{split}
\end{equation}

\begin{equation}
\begin{split}
\langle g^t_v, -\eta g^t \rangle &\leq -\eta \frac{V}{V^{'}} \langle g_{m}, g^{1}_{v} \rangle
\end{split}
\end{equation}

\begin{equation}
\begin{split}
\langle \nabla F(\theta^{t - 1}), - \eta g^t \rangle &\leq 
- \eta \left[ \frac{V}{V^{'}} \langle g_{m}, g^{1}_{v} \rangle - \left( L^2 \eta^2 + \frac{1}{2} \right) CV - d\sigma^2 \right]
\end{split}
\end{equation}

Plugging this back to (15), we get:
\begin{equation}
\begin{split}
F(\theta^t) &\leq F(\theta^{t - 1}) -\eta \frac{V}{V^{'}} \langle g_{m}, g^{1}_{v} \rangle + \\
&\quad \left( \frac{L\eta^2}{2} + L^2 \eta^3 + \frac{1}{2} \eta \right) CV + \eta d\sigma^2
\end{split}
\end{equation}

\begin{equation}
\begin{split}
F(\theta^t) - F(\theta^{t - 1}) \leq -\eta \frac{V}{V^{'}} \langle g_{m}, g^{1}_{v} \rangle + \mathbb{O} (V + d\sigma^2)
\end{split}
\end{equation}

Since \((g_m)_j \in \left[ \min_{\text{correct } i} (g^1_i)_j, \max_{\text{correct } i} (g^1_i)_j \right]\) and gradients are \(i.i.d\), we have:

\begin{equation}
\begin{split}
\langle g_m, g_v^1 \rangle &= \sum_{j = 1}^{d} (g_m)_j (g_v^1)_j \\
&\geq \sum_{j = 1}^{d} (g^1_i)_j (g_v^1)_j\ \text{for some correct } i \text{ on each dimension} \\
\end{split}
\end{equation}

\begin{equation}
\begin{split}
\mathbb{E} \langle g_m, g_v^1 \rangle &\geq \sum_{j = 1}^{d} \mathbb{E}[(g^1_i)_j (g_v^1)_j]\ \text{for some correct } i \text{ on each dimension} \\
&= \sum_{j = 1}^{d} (\nabla F(\theta^1)_j)^2 \\
&= \|\nabla F(\theta^1)\|^2
\end{split}
\end{equation}

By telescoping and taking the expectation of (28) and using the lower bound in (30), after \(T\) iterations we have:

\begin{equation}
\begin{split}
\mathbb{E} \left[ F(\theta^*) - F(\theta^0) \right] \leq \sum_{t = 0}^{T} -\eta \frac{V}{V^{'}} \|\nabla F(\theta^{1})\|^2 + \mathbb{O} (V + d\sigma^2)
\end{split}
\end{equation}

\newpage

\subsection{Full algorithm}

\begin{algorithm}[htp]
\caption{Byzantine-Resilient Gradient Aggregation}\label{alg:byzantine-resilient-aggregation}
\SetAlgoLined
\KwData{initial parameters \(\theta^0\), dataset \(\{X, y\}\), number of workers \(N\), initial k value \(k_0\), increase k threshold \(\tau_k\), batch size \(n_b\), Byzantine ratio \(f\), fixed step size \(\eta\), number of iterations \(T\), loss function \(F\)}
\KwResult{final parameters \(\theta^T\)}

\textbf{Initialization:}\\
\(k \leftarrow k_0\), \( S \leftarrow \) squared Euclidean distance for first iteration median \( g_i \), \( D \leftarrow \) dot product for first iteration median \( g_i \)\;
Create worker subsets and validation set \(X_i, V \neq X_i\) for each worker \(i \in [1, N]\)\;

\For{\(t \leftarrow 1\) \KwTo \(T\)}{
    \textbf{Step 1: Broadcast current estimate}\\
    The server broadcasts the current estimate \(\theta^t\) to all workers\;
    
    Each worker \(i\): \\
    Receives \(\theta^t\) from the server\;
    \eIf{worker \(i\) is honest}{
        Draws random batch \(x \sim X_i\), computes \(g_i = \frac{1}{n_b} \sum_{x_i \in x} \nabla F(\theta^t; x_i)\)\;
    }{
        Computes spurious gradient \(g_i\) as per Byzantine strategy\;
    }
    Sends \(g_i\) back to the server\;
    
    \textbf{Step 2: \textit{dSTAR} Aggregation}\\
    The server waits for gradients to be returned\;
    \For{each returned gradient \(g_i\)}{
        Compute normalized Euclidean distance \(s_i\) and dot product \(d_i\) with respect to the validation set gradient \(g_V\)\;
        \(s_i \leftarrow \frac{\|g_i - g_v\|^2}{\|g_v\|}\)\;
        \(d_i \leftarrow \langle \frac{g_i}{\|g_v\|}, \frac{g_v}{\|g_v\|} \rangle\)\;

        \If{\(s_i \leq S\) \textbf{and} \(d_i \geq D\)}{
            Append \(g_i\) to the accepted list\;
        }
    
        \If{\(k\) gradients have been appended}{
            \textbf{break}\;
        }
    }
    \(g_{\text{agg}} = \frac{1}{k} \sum_{j=1}^{k} g_{\text{accepted}_j}\)\;
    Update model parameters \(\theta^{t+1} \leftarrow \theta^t - \eta g_{\text{agg}}\)\;
    \(t \leftarrow t + 1\)\;
}
\label{alg:byzantine_sgd}
\end{algorithm}
\newpage

\subsection{Training curves for experiments}
\begin{figure}[htbp]
  \centering
  \includegraphics[width=0.8\linewidth]{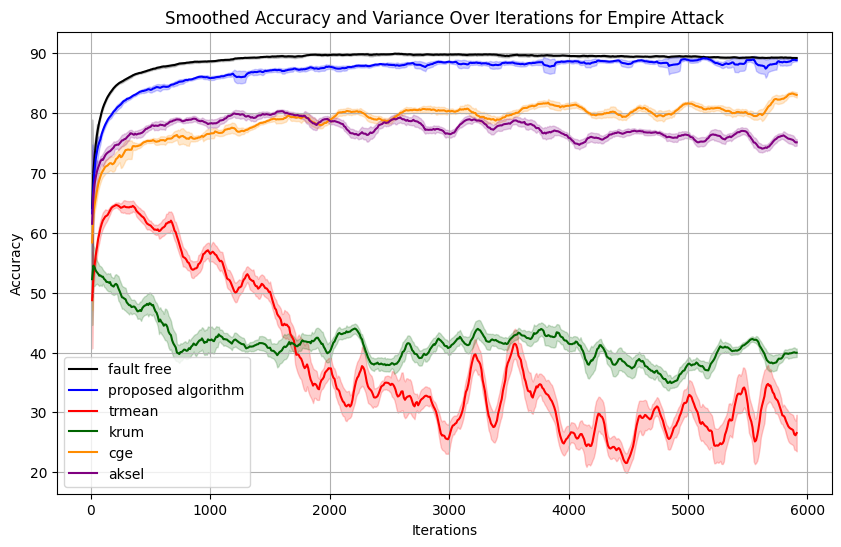}
  \caption{Fashion-MNIST with ``Empire" attack}
  \label{fig:fashion_empire_attack}
\end{figure}

\begin{figure}[htbp]
  \centering
  \includegraphics[width=0.8\linewidth]{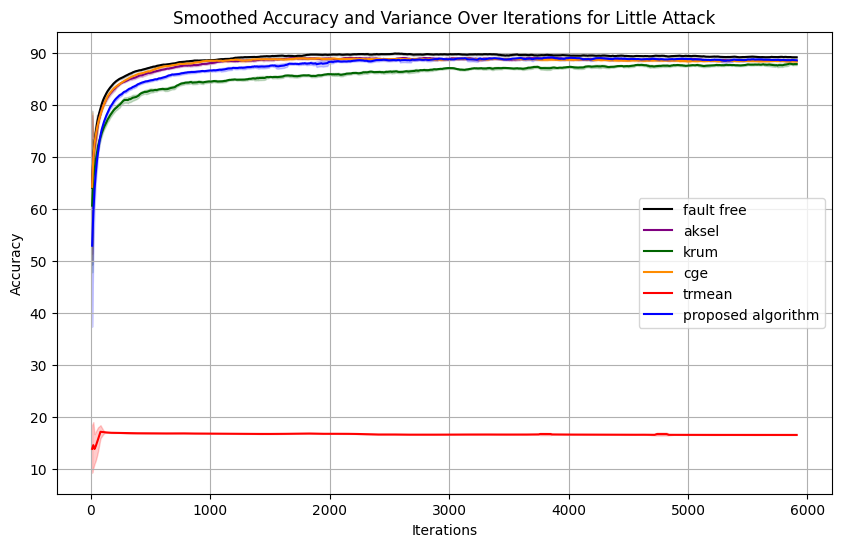}
  \caption{Fashion-MNIST with ``Little" attack}
  \label{fig:fashion_little_attack}
\end{figure}

\begin{figure}[htbp]
  \centering
  \includegraphics[width=0.8\linewidth]{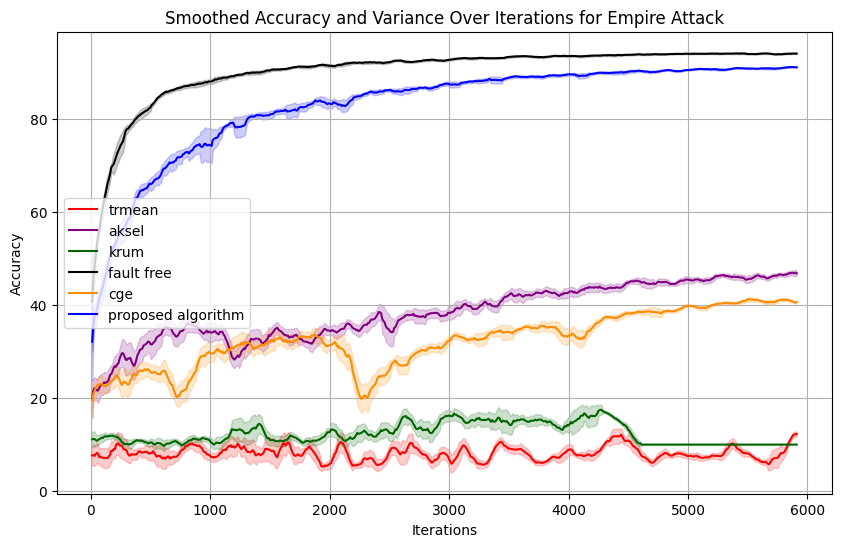}
  \caption{CIFAR10 with ``Empire" attack}
  \label{fig:cifar_empire_attack}
\end{figure}

\begin{figure}[htbp]
  \centering
  \includegraphics[width=0.8\linewidth]{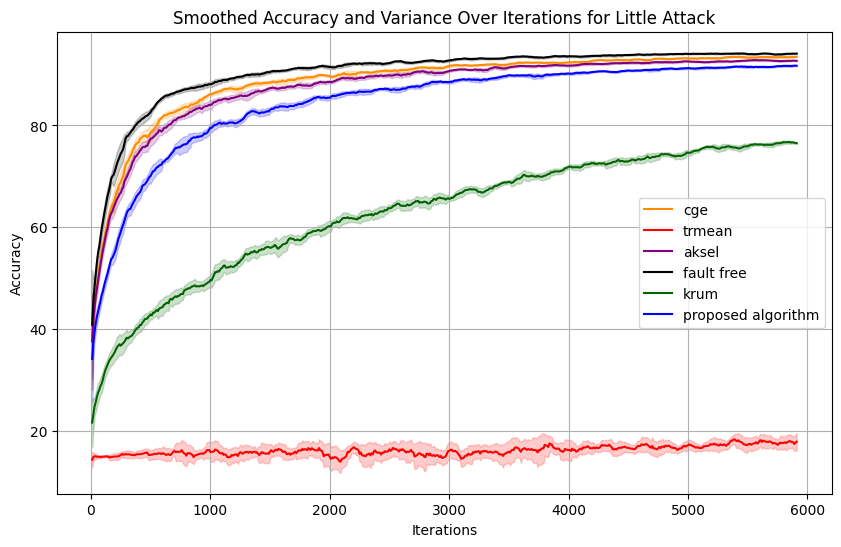}
  \caption{CIFAR10 with ``Little" attack}
  \label{fig:cifar_little_attack}
\end{figure}

\end{document}